\def\ps@headings{%
\def\@oddhead{\mbox{}\scriptsize\rightmark \hfil \thepage}%
\def\@evenhead{\scriptsize\thepage \hfil \leftmark\mbox{}}%
\def\@oddfoot{}%
\def\@evenfoot{}}
\def\boxit#1{\vbox{\hrule\hbox{\vrule\kern3pt
        \vbox{\kern3pt#1\kern3pt}\kern3pt\vrule}\hrule}}
\def\reals{ { {\rm  I \kern-0.15em R }  } }
\def\complex{ {\,{{\rm C} \kern-0.50em \raise0.20ex {  |}}\, }}
\def\Rbf{{\bf R}}
\def\Ac{{\cal A}}
\def\Pc{{\cal P}}
\def\Rc{{\cal R}}
\def\Uc{{\cal U}}
\def\be{\begin{equation}}
\def\ee{\end{equation}}
\def\defeq{{\stackrel{\Delta}{=}}}
\def\Rxx{\Rbf_{\ssstyle X\kern-.1em X}}
\let\ssstyle=\scriptscriptstyle
\def\eg{{\it e.g.,\ \/}}
\def\ie{{\it i.e.,\ \/}}
\def\Kout{\setbox1=\hbox{\Huge\bf K}\hbox to
1.05\wd1{\hspace{.05\wd1}
\def\Sout{\setbox1=\hbox{\Huge\bf S}\hbox to 1.05\wd1{\hspace{.05\wd1}

\def\ie{{\it i.e.,\ \/}}

\def\defeq{{\,\stackrel{\Delta}{=}}\,}

\newtheorem{theorem}{Theorem}

\newtheorem{definition}{Definition}

\begin{document}
\title{\huge Adaptive Shortest-Path Routing under Unknown and Stochastically Varying Link
States}

\author{\IEEEauthorblockN{Keqin Liu\IEEEauthorrefmark{0},
Qing Zhao\IEEEauthorrefmark{0}\\
 }
 \IEEEauthorblockA{\IEEEauthorrefmark{0}
Dept.~of Elec.~and Comp.~Eng., Univ. of California, Davis\\ Email:
\{kqliu,qzhao\}@ucdavis.edu}}

\maketitle

\begin{abstract}

We consider the adaptive shortest-path routing problem in wireless
networks under unknown and stochastically varying link states. In
this problem, we aim to optimize the quality of communication
between a source and a destination through adaptive path selection.
Due to the randomness and uncertainties in the network dynamics, the
quality of each link varies over time according to a stochastic
process with unknown distributions. After a path is selected for
communication, the aggregated quality of all links on this path (\eg
total path delay) is observed. The quality of each individual link
is not observable. We formulate this problem as a multi-armed bandit
with dependent arms. We show that by exploiting arm dependencies, a
regret polynomial with network size can be achieved while
maintaining the optimal logarithmic order with time. This is in
sharp contrast with the exponential regret order with network size
offered by a direct application of the classic MAB policies that
ignore arm dependencies. Furthermore, our results are obtained under
a general model of link-quality distributions (including
heavy-tailed distributions) and find applications in cognitive radio
and ad hoc networks with unknown and dynamic communication
environments.
\end{abstract}

\section{Introduction}

We consider a wireless network where the quality state of each link
is unknown and stochastically varying over time. Specifically, the
state of each link is modeled as a random cost (or reward) evolving
i.i.d. over time under an unknown distribution. At each time, a path
from the source to the destination is selected as the communication
route and the {\em total end-to-end cost}, given by the sum of the
costs of all links on the path, is subsequently observed. The cost
of each individual link is not observable. The objective is to
design the optimal sequential path selection policy to minimize the
long-term total cost.

\subsection{Stochastic Online Learning based on Multi-Armed Bandit}

The above problem can be modeled as a generalized Multi-Armed Bandit
(MAB) with dependent arms. In the classic
MAB~\cite{Robbins:52BAMS,Lai87AS,{Lai&Robbins85AAM},{AgrawalEtal95AAP},
{Auer&etal02ML},{Liu&Zhao11Allerton}}, there are $N$ independent
arms and a player needs to decide which arm to play at each time. An
arm, when played, offers a random cost drawn from an unknown
distribution. The performance of a sequential arm selection policy
is measured by regret, defined as the difference in expected total
cost with respect to the optimal policy in the ideal scenario of
known cost models where the player always plays the best arm. The
optimal regret was shown to be logarithmic with time
in~\cite{Lai&Robbins85AAM}. Furthermore, since cost observations
from one arm do not provide information about the cost of other
arms, the optimal regret grows linearly with the number of arms. We
can model the adaptive routing problem as an MAB by treating each
path as an arm. The difference is that paths are dependent through
shared links. While the dependency across paths can be ignored in
learning and the policies for the classic MAB directly apply, such a
naive approach yields poor performance with a regret growing
linearly with the number of paths, thus exponentially with the
network size (\ie the number of links in the worst case).

In this paper, we show that by exploiting the structure of the path
dependencies, a regret polynomial with network size can be achieved
while preserving its optimal logarithmic order with time.
Specifically, we propose an algorithm that achieves an $O(d^3\log
T)$ regret for all light-tailed cost distributions, where $d\le m$
is the dimension of the path set, $m$ the number of links, and $T$
the length of time horizon. We further show that a modification to
the proposed algorithm leads to a regret linear with $d$ by
sacrificing an arbitrarily small regret order with time. This result
allows a performance tradeoff in terms of the network size and the
time horizon. For example, the algorithm with a smaller regret order
in $d$ can perform better over a finite time horizon when the
network size is large. For heavy-tailed cost distributions, a regret
linear with the number of edges and sublinear with time can be
achieved. We point out that any regret sublinear with time implies
the convergence of the time-average cost to the minimum one of the
best path.

We further generalize the adaptive routing problem to stochastic
online linear optimization problems, where the action space is a
compact subset of $\Rc^d$ and the random cost function is linear on
the actions. We show that for all light-tailed cost functions,
regret polynomial or linear with $d$ and sublinear with time can be
achieved. We point out that for cases where there exists a nonzero
gap in the expected cost between the optimal action and the rest of
the action space (\eg when the action set is a polytope or finite),
the same regret orders obtained for the adaptive routing problem can
be achieved.

\subsection{Applications}

One application example is adaptive routing in cognitive radio
networks where secondary users communicate by exploiting channels
temporarily unoccupied by primary users. In this case, the
availability of each link dynamically varies according to the
communication activities of nearby primary users. The delay on each
link can thus be modeled as a stochastic process unknown to the
secondary users. The objective is to route through the path with the
smallest latency (\ie the lightest primary traffic) through
stochastic online learning.

Other applications include ad hoc networks where link states vary
stochastically due to channel fading or random contentions with
other users.

\subsection{Related Work}\label{sec:related}

The classic MAB was addressed by Lai and Robbins in 1985, where they
showed that the minimum regret has a logarithmic order with time and
proposed specific policies to asymptotically achieve the optimal
regret for several cost distributions in the light-tailed
family~\cite{Lai&Robbins85AAM}. Since Lai and Robbins's seminar
work, simpler index policies were proposed for different classes of
light-tailed cost distributions to achieve the logarithmic regret
order with time~\cite{Lai87AS,AgrawalEtal95AAP,Auer&etal02ML}.
In~\cite{Liu&Zhao11Allerton}, we proposed the DSEE approach that
achieves the logarithmic regret order with time for all light-tailed
cost distributions and sublinear regrets with time for heavy-tailed
cost distributions. All regrets established
in~\cite{Lai&Robbins85AAM,Lai87AS,AgrawalEtal95AAP,Auer&etal02ML,Liu&Zhao11Allerton}
grow linearly with the number of arms. This work builds upon the
general DSEE structure proposed in~\cite{Liu&Zhao11Allerton}. By
incorporating the exploitation of the arm dependencies into DSEE, we
show that the learning efficiency can be significant improved in
terms of network size while preserved in terms of time.

This work generalizes the previous
work~\cite{TakimotoWarmuth03JMLR,{KalaiVempala05},{GaiEtal10TR}}
that assumes fully observable link costs on a chosen path. In
particular, the adaptive routing problem under this more informative
observation model was considered in~\cite{GaiEtal10TR} and an
algorithm was proposed to achieve $O(m^4\log T)$ regret, where $m$
is the number of links in the network. The problems considered in
this paper were also studied under an adversarial bandit model in
which the cost functions are chosen by an adversary and are treated
as arbitrary bounded deterministic
quantities~\cite{AwerbuchKleinberg08JCSS}. Algorithms were proposed
to achieve regrets sublinear with time and polynomial with network
size. The problem formulation and results established in this paper
can be considered as a stochastic version of those
in~\cite{AwerbuchKleinberg08JCSS}.

For the special class of stochastic online linear optimization
problems where there exists a nonzero gap in expected cost between
the optimal action and the rest of the action space and the cost has
a finite support, an algorithm was proposed in~\cite{DaniEtal08COLT}
to achieve an $O(d^2\log^3T)$ regret given that a nontrivial lower
bound on the gap is known. The algorithm proposed in this paper
achieves a better regret in terms of both $d$ and $T$ without any
knowledge on the cost model. For the general case with
finite-support cost distributions, the regret was shown to be lower
bounded by $O(d\sqrt{T})$ and an efficient algorithm was proposed to
achieve an $O((d\ln T)^{3/2}\sqrt{T})$ regret~\cite{DaniEtal08COLT}.
Compared to the algorithm in~\cite{DaniEtal08COLT}, our algorithm
for the general stochastic online linear optimization problems
performs worse in $T$ but better in $d$.

\section{Problem Statement}

In this section, we address the adaptive routing problem under
unknown and stochastically time-varying link states. Consider a
network with a source $s$ and a destination $r$. Let $G=(V,E)$
denote the directed graph consisting of all simple paths from $s$ to
$r$. Let $m$ and $n$ denote, respectively, the number of edges and
vertices in graph $G$.

At each time $t$, a random weight/cost $W_e(t)$ drawn from an
unknown distribution is assigned to each edge in $E$. We assume that
$\{W_e(t)\}$ are i.i.d. over time for each edge $e$. At the
beginning of each time slot $t$, a path $p_n\in
\Pc~(i\in\{1,2,\ldots,|\Pc|\})$ from $s$ to $r$ is chosen, where
$\Pc$ is the set of all paths from $s$ to $r$ in $G$. Subsequently,
the {\em total end-to-end cost} $C_t(p_n)$ of the path $p_n$, given
by the sum of the weights of all edges on the path, is revealed in
the end of the slot. We point out that the individual cost on each
edge $e\in E$ is unobservable.

The regret of a path selection policy $\pi$ is defined as the
expected extra cost incurred over time $T$ compared to the optimal
policy that always selects the best path (\ie the path with the
minimum expected total end-to-end cost). The objective is to design
a path selection policy $\pi$ to minimize the growth rate of the
regret. Let $\sigma$ be a permutation on all paths such that
\[\mathbb{E}[C_t(\sigma(1))]\le\mathbb{E}[C_t(\sigma(1))]\le\mathbb{E}[C_t(\sigma(|\Pc|))].\]
We define regret
{\[\Rc^{\pi}(T)\defeq\mathbb{E}[\sum_{t=1}^T(C_t(\pi)-C_t(\sigma(1)))],\]}where
$C_t(\pi)$ denotes the total end-to-end cost of the selected path
under policy $\pi$ at time $t$.

\section{Adaptive Shortest Path Routing Algorithm}

In this section, we present the proposed adaptive shortest-path
routing algorithm. We consider both light-tailed and heavy-tailed
cost distributions.

\subsection{Light-Tailed Cost Distributions}\label{sec:pathvec}
We consider the light-tailed cost distributions as defined below.

\begin{definition}\label{def:mgf}
A random variable $X$ is light-tailed if its moment-generating
function exists, \ie there exists a $u_0>0$ such that
{\[M(u)\defeq\mathbb{E}[\exp(uX)]<\infty~~\forall~u\le|u_0|;\]}otherwise
$X$ is heavy-tailed.
\end{definition}

For a zero-mean light-tailed random variable $X$, we
have~\cite{{CharekaEtal06JM}}, {\begin{eqnarray}\label{eqn:mgfbound}
M(u)\le\exp(\zeta
u^2/2),~~~~\forall~u\le|u_0|,~\zeta\ge\sup\{M^{(2)}(u),~-u_0\le u\le
u_0\},
\end{eqnarray}}where $M^{(2)}(\cdot)$ denotes the second derivative of $M(\cdot)$
and $u_0$ the parameter specified in Definition~\ref{def:mgf}.
From~\eqref{eqn:mgfbound}, we have the following extended
Chernoff-Hoeffding bound on the deviation of the sample mean from
the true mean for light-tailed random
variables~\cite{Liu&Zhao11Allerton}.

Let $\{X(t)\}_{t=1}^{\infty}$ be i.i.d. light-tailed random
variables. Let $\overline{X_s}=(\Sigma_{t=1}^s X(t))/s$ and
$\theta=\mathbb{E}[X(1)]$. We have, for all $\delta\in [0,\zeta
u_0], a\in (0, 1/(2\zeta)]$, {\begin{eqnarray}\label{eqn:chernoff}
\Pr(|\overline{X_s}-\theta|\ge \delta)\le 2\exp(-a\delta^2s).
\end{eqnarray}}Note that the path cost is the sum of the costs of all edges on the
path. Since the number of edges in a path is upper bounded by $m$,
the bound on the moment generating function in~\eqref{eqn:mgfbound}
holds on the path cost by replacing $\zeta$ by $m\zeta$, and so does
the Chernoff-Hoeffding bound in~\eqref{eqn:chernoff}.

In the following, we propose an algorithm to achieve a regret
polynomial with $m$ and logarithmic with $T$. We first represent
each path $p_n$ as a vector $\vec{p}_n$ with $m$ entries consisting
of $0$s and $1$s representing whether or not an edge is on the path.
The vector space of all paths is embedded in a
$d$-dimensional~($d\le m$) subspace\footnote{If graph $G$ is
acyclic, then $d=m-n+2$.} of $\Rc^m$. The cost on each path $p_n$ at
time $t$ is thus given by the linear function
{\[[W_1(t),W_2(t),\ldots,W_m(t)]\cdot \vec{p}_n.\]}The general
structure of the algorithm follows the DSEE framework established
in~\cite{Liu&Zhao11Allerton} for the classic MAB. More specifically,
we partition time into an exploration sequence and an exploration
sequence. In the exploration sequence, we sample the $d$ basis
vectors (barycentric spanner~\cite{AwerbuchKleinberg08JCSS} as
defined in the following) evenly to estimate the qualities of these
vectors. In the exploitation sequence, we select the action
estimated as the best by linearly interpolating the estimated
qualities of the basis vectors. A detailed algorithm is given in
Fig.~\ref{fig:SOSP}.

\begin{definition}
A set $B=\{x_1,\ldots,x_d\}$ is a {\em barycentric spanner} for a
$d$-dimensional set $A$ if every $x\in A$ can be expressed as a
linear combination of elements of $B$ using coefficients
in~$[-1,1]$.
\end{definition}

Note that a compact subset of $\Rc^d$ always has a barycentric
spanner, which can be constructed efficiently (see an algorithm
in~\cite{AwerbuchKleinberg08JCSS}). For the problem at hand, the
path set $\Pc$ is a compact subset of $\Rc^d$ with dimension $d\le
m$. We can thus construct a barycentric spanner consisting of $d$
paths in $\Pc$ by using the algorithm
in~\cite{AwerbuchKleinberg08JCSS}.

\begin{figure}[htbp]
\begin{center}
\noindent\fbox{
\parbox{6in}
{\centerline{\underline{{\bf Adaptive Shortest-Path Routing
Algorithm}}} {
\begin{itemize}
\item Notations and Inputs: Construct a barycentric spanner
$B=\{\vec{p}_1,\vec{p}_2,\ldots,\vec{p}_d\}$ of the vector space of
all paths. Let $\Ac(t)$ denote the set of time indices that belong
to the exploration sequence up to (and including) time $t$ and
$\Ac(1)=\{1\}$. Let $|\Ac(t)|$ denote the cardinality of $\Ac(t)$.
Let $\overline{\theta}_{p_i}(t)$ denote the sample mean of path
$p_i~(i\in\{1,\ldots,d\})$ computed from the past cost observations
on the path. For two positive integers $k$ and $l$, define $k\oslash
l\defeq ((k-1)~\mbox{mod}~l)+1$, which is an integer taking values
from
$1,2,\cdots,l$.\\[-1.5em]
\item At time $t$,
\begin{enumerate}
\item[1.] if $t\in\Ac(t)$, choose path $p_n$ with $n=|\Ac(t)|\oslash d$;
\item[2.] if $t\notin\Ac(t)$, estimate the sample mean of each path
$p_n\in\Pc$ by linearly interpolating $\{\overline{\theta}_{p_1}(t),
\overline{\theta}_{p_2}(t),\ldots,\overline{\theta}_{p_d}(t)\}$.
Specifically, let $\{a_i:|a_i|\le1\}_{i=1}^d$ be the coefficients
such that $\vec{p}_n=\sum_{i=1}^da_i\vec{p}_i$, then
$\overline{\theta}_{p_n}(t)=\sum_{i=1}^da_i\overline{\theta}_{i}(t)$.
Choose path
\[p^*=\arg\min\{\overline{\theta}_{p_n}(t),~n=1,2,\dots,|\Pc|\}.\]
\end{enumerate}
\end{itemize}} }} \caption{The general structure of the algorithm.}\label{fig:SOSP}
\end{center}
\end{figure}

\begin{theorem}\label{thm:opt}
Construct an exploration sequence as follows. Let $a,\zeta,u_0$ be
the constants such that~\eqref{eqn:chernoff} holds on each edge
cost. Choose a constant $b>2m/a$, a constant
{\small\[c\in(0,\min_{j:\mathbb{E}[C_t(\sigma(j))-C_t(\sigma(1))]>0}
\{\mathbb{E}[C_t(\sigma(j))-C_t(\sigma(1))]\}),\]}and a constant
$w\ge\max\{b/(md\zeta u_0)^2, 4b/c^2\}$. For each $t>1$, if
$|\Ac(t-1)|<d\lceil d^2w\log t\rceil$, then include $t$ in $\Ac(t)$.
Under this exploration sequence, the resulting policy $\pi^*$ has
regret {\[R^{\pi^*}(T)\le Amd^3\log T\]}for some constant $A$
independent of $d$, $m$ and $T>1$.
\end{theorem}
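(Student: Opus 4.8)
The plan is to split $R^{\pi^*}(T)$ into the cost incurred during the exploration sequence and the cost incurred during the exploitation sequence, and to show that the former is $O(md^3\log T)$ while the latter is $O(1)$ in $T$. For the exploration part, the construction of $\mathcal{A}(t)$ adds a new slot only while $|\mathcal{A}(t-1)|<d\lceil d^2 w\log t\rceil$, so the number of exploration slots up to $T$ is at most $d\lceil d^2 w\log T\rceil$, each contributing at most the largest expected end-to-end cost gap (a fixed problem constant). Since the dominant constraint on $w$ is $w\ge 4b/c^2$ with $b>2m/a$, the chosen $w$ is proportional to $m$, whence $d\lceil d^2 w\log T\rceil=O(md^3\log T)$; this yields the claimed leading term once the gap and the fixed constants $a,c,\zeta,u_0$ are absorbed into $A$.

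The heart of the argument is bounding the probability that at an exploitation slot $t$ the interpolated rule selects a suboptimal path. First I would exploit the barycentric-spanner property: writing $\vec{p}_n=\sum_{i=1}^d a_i\vec{p}_i$ with $|a_i|\le1$, the interpolation identity $\overline{\theta}_{p_n}(t)=\sum_i a_i\overline{\theta}_{p_i}(t)$ gives $|\overline{\theta}_{p_n}(t)-\mathbb{E}[C_t(p_n)]|\le\sum_{i=1}^d|a_i|\,|\overline{\theta}_{p_i}(t)-\mathbb{E}[C_t(p_i)]|\le d\max_i|\overline{\theta}_{p_i}(t)-\mathbb{E}[C_t(p_i)]|$. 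Hence if every basis estimate is within $c/(2d)$ of its true mean, every path estimate is within $c/2$ of its true mean; since $c$ is strictly smaller than the smallest positive gap, the optimal path then has a strictly smaller estimate than any suboptimal path and no error occurs. Thus a wrong selection forces some basis estimate to deviate by at least $c/(2d)$.

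Next I would bound this deviation probability. The round-robin rule $n=|\mathcal{A}(t)|\oslash d$ samples the $d$ basis paths evenly, so at an exploitation slot $t$ (where $|\mathcal{A}(t-1)|\ge d\lceil d^2w\log t\rceil$ by construction) each basis path has at least $\lceil d^2 w\log t\rceil\ge d^2w\log t$ observations. Applying \eqref{eqn:chernoff} to the path cost—valid with $\zeta$ replaced by $m\zeta$, hence with constant $a/m$ and admissible deviation range $[0,m\zeta u_0]$—at threshold $\delta=\min\{c/(2d),\,m\zeta u_0\}$ and sample size $s\ge d^2w\log t$, and taking a union bound over the $d$ basis paths, gives $\Pr(\text{error at }t)\le 2d\exp(-(a/m)\delta^2 d^2 w\log t)$; note $\delta\le c/(2d)$, so the event that some basis estimate deviates by at least $\delta$ contains the error event. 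The two lower bounds on $w$ cover the two regimes of the minimum: when $\delta=c/(2d)$ the bound $w\ge 4b/c^2$ makes the exponent at least $(a/m)b\log t$, and when $\delta=m\zeta u_0$ the bound $w\ge b/(md\zeta u_0)^2$ makes it the same $(a/m)b\log t$; since $b>2m/a$, the exponent exceeds $2\log t$ in both cases. Hence $\Pr(\text{error at }t)\le 2d\,t^{-2}$, so $\sum_t\Pr(\text{error at }t)<\infty$ and the exploitation regret is $O(1)$ in $T$.

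Combining the two parts gives $R^{\pi^*}(T)\le O(md^3\log T)+O(1)=Amd^3\log T$. I expect the main obstacle to be the exploitation step, and specifically the factor-of-$d$ amplification of the estimation error produced by the interpolation: because the per-basis error threshold must shrink like $c/(2d)$, the Chernoff exponent loses a factor $d^2$, which is exactly why the algorithm samples each basis path $\Theta(d^2\log t)$ times rather than $\Theta(\log t)$. Making this trade-off precise—verifying that the chosen constants simultaneously keep $\delta$ inside the validity range of \eqref{eqn:chernoff} and push the exponent above $1$ in both regimes—is the delicate accounting that drives the $d^3$ (rather than $d$) dependence of the regret.
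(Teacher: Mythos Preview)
Your proof is correct and follows the same approach as the paper: split the regret into exploration and exploitation, bound the exploration part by the cardinality $|\mathcal{A}(T)|\le d\lceil d^2w\log T\rceil=O(md^3\log T)$, and for exploitation use the barycentric-spanner interpolation (which amplifies basis errors by at most a factor $d$) together with the path-level Chernoff bound \eqref{eqn:chernoff} (with $\zeta\to m\zeta$) to make the error probability summable. Your bookkeeping is in fact slightly more direct than the paper's---the paper groups exploitation slots into contiguous periods $E_k$ with start times $t_k$ and invokes the auxiliary bound $|E_k|\le ht_k$ before summing $\sum_k 2d\,t_k^{1-ab/m}$, whereas you bound each exploitation slot individually via $\tau_i(t)\ge d^2w\log t$ and sum $\sum_t 2d\,t^{-ab/m}$ directly---but the two routes are equivalent and both yield an $O(d)$ exploitation contribution.
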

\begin{proof}
Since $\Ac(T)\le d\lceil d^2w\log T\rceil$, the regret caused in the
exploration sequence is at the order of $md^3\log T$. Now we
consider the regret caused in the exploitation sequence. Let $E_k$
denote the $k$th exploitation period which is the $k$th contiguous
segment in the exploitation sequence. Let $E_k$ denote the $k$th
exploitation period. Similar to the proof of Theorem~3.1
in~\cite{Liu&Zhao11Allerton}, we have
\begin{eqnarray}\label{eqn:boundE}
|E_k|\le ht_k
\end{eqnarray}for some constant $h$ independent of $d$
and $m$. Let $t_k>1$ denote the starting time of the $k$th
exploitation period. Next, we show that by applying the
Chernoff-Hoeffding bound in~\eqref{eqn:chernoff} on the path cost,
for any $t$ in the $k$th exploitation period and $i=1,\ldots,d$, we
have
\[\Pr(|\mathbb{E}[C_t(p_i)]-\overline{\theta}_{p_i}(t)|\ge c/(2d))\le
2t_k^{-ab/m}.\]To show this, we define the parameter
$\epsilon_i(t)\defeq\sqrt{b\log t/\tau_i(t)}$, where $\tau_i(t)$ is
the number of times that path $i$ has been sampled up to time $t$.
From the definition of parameter $b$, we have
\begin{eqnarray}\label{eqn:neighbor}
\epsilon_i(t)\le\min\{m\zeta u_0,c/(2d)\}.
\end{eqnarray}Applying the
Chernoff-Hoeffding bound, we arrive at
\[\Pr(|\mathbb{E}[C_t(p_i)]-\overline{\theta}_{p_i}(t)|\ge c/(2d))\le
\Pr(|\mathbb{E}[C_t(p_i)]-\overline{\theta}_{p_i}(t)|\ge
\epsilon_i(t))\le 2t_k^{-ab/m}.\]In the exploitation sequence, the
expected times that at least one path in $B$ has a sample mean
deviating from its true mean cost by $c/(2d)$ is thus bounded by
\begin{eqnarray}\label{eqn:boundo}
\Sigma_{k=1}^{\infty}2dt_k^{-ab/m}t_k\le\Sigma_{t=1}^{\infty}2dt^{1-ab/m}=gd.
\end{eqnarray}for some constant $g$ independent of $d$ and $m$. Based on the
property of the barycentric spanner, the best path would not be
selected in the exploitation sequence only if one of the basis
vector in $B$ has a sample mean deviating from its true mean cost by
at least $c/(2d)$. We thus proved the theorem.
\end{proof}

In Theorem~\ref{thm:opt}, we need a lower bound (parameter $c$) on
the difference in the cost means of the best and the second best
paths. We also need to know the bounds on parameters $\zeta$ and
$u_0$ such that the Chernoff-Hoeffding bound~\eqref{eqn:chernoff}
holds. These bounds are required in defining $w$ that specifies the
minimum leading constant of the logarithmic cardinality of the
exploration sequence necessary for identifying the best path.
Similar to~\cite{Liu&Zhao11Allerton}, we can show that without any
knowledge of the cost models, increasing the cardinality of the
exploration sequence of $\pi^*$ by an arbitrarily small amount leads
to a regret linear with $d$ and arbitrarily close to the logarithmic
order with time.

\begin{theorem}\label{thm:nearlog}
Let $f(t)$ be any positive increasing sequence with
$f(t)\rightarrow\infty$ as $t\rightarrow\infty$. Revise policy
$\pi^*$ in Theorem~\ref{thm:opt} as follows: include $t~(t>1)$ in
$\Ac(t)$ if $|\Ac(t-1)|<d\lceil f(t)\log t\rceil$. Under the revised
policy $\pi'$, we have {\[R^{\pi'}(T)=O(df(T)\log T).\]}
\end{theorem}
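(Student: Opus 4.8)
The plan is to mirror the two-part structure of the proof of Theorem~\ref{thm:opt}, separately bounding the regret accrued during exploration and during exploitation, but with the fixed leading constant $d^2w$ of the exploration threshold replaced by the diverging sequence $f(t)$. Throughout I would keep $m$ and the cost-model constants absorbed into the $O(\cdot)$ and track only the dependence on $d$, $f(T)$, and $T$.

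First I would bound the exploration regret. By the revised stopping rule, the number of exploration slots up to $T$ is at most $d\lceil f(T)\log T\rceil$. Each such slot selects one spanner path and therefore incurs at most the constant $\max_i(\mathbb{E}[C_t(p_i)]-\mathbb{E}[C_t(\sigma(1))])$ of extra cost, which is independent of $T$. Hence the exploration sequence contributes regret $O(df(T)\log T)$, already matching the claimed order.

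Next I would bound the exploitation regret, following the argument around~\eqref{eqn:chernoff}. By the barycentric-spanner property and the choice of $c$ as a lower bound on the optimality gap, in any exploitation slot the interpolated estimate selects a suboptimal path only if some basis path $p_i$ satisfies $|\overline{\theta}_{p_i}(t)-\mathbb{E}[C_t(p_i)]|\ge c/(2d)$. The crucial difference from Theorem~\ref{thm:opt} is that, under the revised rule, cycling evenly through the $d$ spanner paths forces $\tau_i(t)\ge\lfloor f(t_k)\log t_k\rfloor$ throughout the $k$th exploitation period (with start time $t_k$). Applying the extended Chernoff–Hoeffding bound~\eqref{eqn:chernoff} on the path cost (with $\zeta$ replaced by $m\zeta$) at the fixed deviation level $\delta=c/(2d)$ then yields
\[\Pr\!\left(|\overline{\theta}_{p_i}(t)-\mathbb{E}[C_t(p_i)]|\ge c/(2d)\right)\le 2\exp\!\left(-\tfrac{ac^2}{4md^2}\,f(t_k)\log t_k\right)=2\,t_k^{-\beta(t_k)},\]
where $\beta(t_k)\defeq\tfrac{ac^2}{4md^2}f(t_k)$. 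Combining this with an analog of~\eqref{eqn:boundE}, $|E_k|\le ht_k$ (which persists since the revised exploration is at least as dense as in Theorem~\ref{thm:opt} for large $t$), and a union bound over the $d$ spanner paths, the expected number of suboptimal exploitation slots is at most $\sum_k 2hd\,t_k^{1-\beta(t_k)}$.

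The hard part will be establishing convergence of this series without knowing the cost-model constants. Unlike Theorem~\ref{thm:opt}, where the fixed choices of $b$ and $w$ forced a constant exponent $ab/m-1>1$, here $\beta(t_k)$ need not exceed any fixed value for small $k$. I would exploit $f(t)\to\infty$: there is a finite index $K_0$ beyond which $\beta(t_k)\ge 3$, so the tail $\sum_{k\ge K_0}t_k^{1-\beta(t_k)}\le\sum_{k\ge K_0}t_k^{-2}$ converges, while the finitely many terms with $k<K_0$ contribute a bounded amount; I would also verify that the fixed level $c/(2d)$ lies in the admissible range of~\eqref{eqn:chernoff}, which holds as $c/(2d)$ is a constant and any larger deviation only has smaller probability. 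This renders the expected number of suboptimal exploitation slots $O(d)$, so the exploitation regret is dominated by the $O(df(T)\log T)$ exploration term, giving $R^{\pi'}(T)=O(df(T)\log T)$.
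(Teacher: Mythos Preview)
Your proposal is correct and follows the same two-part decomposition as the paper: the exploration contribution is trivially $O(df(T)\log T)$ from the cardinality bound, and the exploitation contribution is shown to be bounded in $T$ (of order $d$) by exploiting $f(t)\to\infty$ to drive the Chernoff exponent to infinity, together with the analog of~\eqref{eqn:boundE}. The only difference is in bookkeeping: the paper introduces an auxiliary increasing sequence $b(t)$ with $b(t)=o(f(t))$ and $b(t)\to\infty$, re-runs the Theorem~\ref{thm:opt} estimates with $b$ replaced by $b(t)$, and argues that after a finite time $T_0$ both~\eqref{eqn:neighbor} and~\eqref{eqn:boundo} are restored; you instead apply~\eqref{eqn:chernoff} directly at the fixed deviation level $c/(2d)$ and observe that the resulting exponent $\beta(t_k)=\tfrac{ac^2}{4md^2}f(t_k)$ eventually exceeds any fixed threshold, so the tail of $\sum_k t_k^{\,1-\beta(t_k)}$ is summable. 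Your route avoids the auxiliary sequence and is slightly more transparent, but the underlying mechanism---a diverging Chernoff exponent forced by the super-logarithmic exploration budget---is identical to the paper's.
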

\begin{proof}
It is sufficient to show that the regret caused in the exploitation
sequence is bounded by $O(d)$, independent of $T$. Since the
exploration sequence is denser than the logarithmic order as in
Theorem~\ref{thm:opt}, it is not difficult to show that the bound on
$|E_k|$ given in~\eqref{eqn:boundE} still holds with a different
value of $h$.

We consider any positive increasing sequence $b(t)$ such that
$b(t)=o(f(t))$ and $b(t)\rightarrow\infty$ as $t\rightarrow\infty$.
By replacing $b$ in the proof of Theorem~\ref{thm:opt} with $b(t)$,
we notice that after some finite time $T_0$, the parameter
$\epsilon_i(t)$ will be small enough to ensure~\eqref{eqn:neighbor}
holds and $b(t)$ will be large enough to ensure~\eqref{eqn:boundo}
holds. The proof thus follows.
\end{proof}

\subsection{Heavy-Tailed Cost Distributions}We now consider the heavy-tailed cost distributions
where the moment of the cost exists up to the $q$-th ($q>1$) order.
From~\cite{Liu&Zhao11Allerton}, we have the following bound on the
deviation of the sample mean from the true mean for heavy-tailed
cost distributions.

Let $\{X(t)\}_{t=1}^{\infty}$ be i.i.d. random variables drawn from
a distribution with the $q$-th moment ($q>1$). Let
$\overline{X}_t=(\Sigma_{k=1}^t X(k))/t$ and
$\theta=\mathbb{E}[X(1)]$. We have, for all $\epsilon>0$,
\begin{eqnarray}\label{eqn:sum1}
\Pr(|\overline{X}_t-\theta|>\epsilon)=o(t^{1-q}).
\end{eqnarray}

\begin{theorem}\label{thm:general}
Construct an exploration sequence as follows. Choose a constant
$v>0$. For each $t>1$, if $|\Ac(t-1)|<vt^{1/q}$, then include $t$ in
$\Ac(t)$. Under this exploration sequence, the resulting policy
$\pi^q$ has regret {\small$R^{\pi^q}(T)\le DdT^{1/q}$} for some
constant $D$ independent of $d$ and $T$.
\end{theorem}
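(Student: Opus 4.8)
The plan is to mirror the structure of the proof of Theorem~\ref{thm:opt}, replacing the light-tailed Chernoff-Hoeffding bound~\eqref{eqn:chernoff} with the polynomial-decay bound~\eqref{eqn:sum1} for heavy-tailed distributions, and to account for the fact that the exploration sequence now grows at the sublinear rate $vt^{1/q}$ rather than logarithmically. First I would bound the regret incurred during the exploration sequence: since the rule includes $t$ in $\Ac(t)$ whenever $|\Ac(t-1)|<vt^{1/q}$, the cardinality $|\Ac(T)|$ is of order $vT^{1/q}$, and each exploration slot contributes at most a constant extra cost (the per-slot cost gap is uniformly bounded because a path has at most $m$ edges and each edge cost has finite mean). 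Since the $d$ spanner paths are sampled in round-robin fashion, the exploration regret is $O(dT^{1/q})$, which already matches the claimed order $DdT^{1/q}$. The main work is therefore to show that the exploitation regret is dominated by this term.

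Next I would analyze the exploitation sequence using the same barycentric-spanner argument. As in Theorem~\ref{thm:opt}, the suboptimal path is chosen during exploitation only if at least one of the $d$ spanner paths $p_i\in B$ has a sample mean $\overline{\theta}_{p_i}(t)$ deviating from its true mean $\mathbb{E}[C_t(p_i)]$ by more than $c/(2d)$, where $c$ is a fixed lower bound on the gap between the best and second-best path means (this gap exists by the nonzero-gap assumption carried over from the routing problem). Applying~\eqref{eqn:sum1} with $\epsilon=c/(2d)$ to each spanner path, and noting that by the start $t_k$ of the $k$th exploitation period each spanner path has been sampled $\Omega(t_k^{1/q})$ times, I would bound the per-period deviation probability by $o(t_k^{1-q})$ in the sampling count, which translates into a summable bound on the expected number of erroneous exploitation slots. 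Here I would reuse the fact, analogous to~\eqref{eqn:boundE}, that the length $|E_k|$ of the $k$th exploitation period is at most a constant multiple of $t_k$, so that the expected total exploitation regret is bounded by a convergent sum $\sum_k |E_k|\cdot 2d\cdot o(\cdot)$ that is $O(d)$, independent of $T$.

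The hard part will be making the deviation bound~\eqref{eqn:sum1} interact correctly with the $t^{1/q}$ exploration rate so that the resulting sum over exploitation periods converges. The subtlety is that~\eqref{eqn:sum1} gives decay $o(s^{1-q})$ in the number of samples $s$, not directly in calendar time $t$; I must verify that after $t_k$ slots the number of samples per spanner path is at least of order $t_k^{1/q}$, whence the deviation probability is $o((t_k^{1/q})^{1-q})=o(t_k^{(1-q)/q})=o(t_k^{-(q-1)/q})$. Multiplying by $|E_k|=O(t_k)$ and summing over $k$ then requires checking that $\sum_k t_k\cdot t_k^{-(q-1)/q}=\sum_k t_k^{1/q}$ remains controlled; this is where the precise spacing of the exploitation periods matters, and I would use that the $t_k$ are exponentially separated (a consequence of the $|E_k|\le h t_k$ bound) so that the geometric-type sum converges. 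Once this convergence is established, the exploitation regret is $O(d)$ and is absorbed into the dominant exploration term $O(dT^{1/q})$, yielding $R^{\pi^q}(T)\le DdT^{1/q}$ for a suitable constant $D$ independent of $d$ and $T$.
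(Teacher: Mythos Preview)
Your high-level strategy (barycentric-spanner deviation argument plus the heavy-tail bound~\eqref{eqn:sum1}) matches the paper, but two points deserve correction.

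First, the epoch structure from Theorem~\ref{thm:opt} is an unnecessary detour here. The paper does not use~\eqref{eqn:boundE} at all in this proof. Instead it observes that for \emph{every} exploitation slot $t$ the defining rule gives $|\Ac(t)|\ge vt^{1/q}$ directly (if $t\notin\Ac(t)$ then $|\Ac(t-1)|\ge vt^{1/q}$ and $|\Ac(t)|=|\Ac(t-1)|$). Hence each spanner path has $\Omega(t^{1/q})$ samples at time $t$, and~\eqref{eqn:sum1} yields
\[
\Pr\big(|\mathbb{E}[C_t(p_i)]-\overline{\theta}_{p_i}(t)|\ge c/(2d)\big)=o\big(t^{(1-q)/q}\big).
\]
Summing this probability over $t=1,\ldots,T$ and over the $d$ spanner paths gives exploitation regret $d\sum_{t=1}^T o(t^{(1-q)/q})=d\cdot o(T^{1/q})$, since the exponent $(1-q)/q\in(-1,0)$. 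This is simpler than tracking $t_k$ and $|E_k|$.

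Second, your epoch-based argument contains a genuine error in the intermediate claim. You assert that $\sum_k t_k^{1/q}$ converges, giving $O(d)$ exploitation regret independent of $T$. This is false: the terms $t_k^{1/q}\to\infty$, so the series diverges. Even with the geometric spacing you invoke, the partial sum up to horizon $T$ is dominated by its last term and is $\Theta(T^{1/q})$, not $O(1)$. Thus your route, done correctly, yields exploitation regret $O(dT^{1/q})$ rather than $O(d)$. That is still enough for the theorem (it matches the exploration regret), so your final bound survives, but the claimed $O(d)$ exploitation bound and the ``convergent sum'' reasoning do not.
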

\begin{proof}
Based on the construction of the exploration sequence, it is
sufficient to show that the regret in the exploitation sequence is
$o(T^{1/q})\cdot d$. From~\eqref{eqn:sum1}, we have, for any
$i=1,\ldots,d$,
\[\Pr(|\mathbb{E}[C_t(p_i)]-\overline{\theta}_{p_i}(t)|\ge c/(2d))
=o(|\Ac(t)|^{1-q}).\]For any exploitation slot $t\in\Ac(t)$, we have
$|\Ac(t)|\ge vt{1/q}$. We arrive at
\[\Pr(|\mathbb{E}[C_t(p_i)]-\overline{\theta}_{p_i}(t)|\ge c/(2d))
=o(t^{(1-q)/q}).\]Since the best path will not be chosen only if at
least one of the basis vector has the sample deviating from the true
mean by $c/(2d)$, the regret in the exploitation sequence is thus
bounded by \[\sum_{t=1}^To(t^{(1-q)/q})\cdot d=o(T^{1/q})\cdot d.\]
\end{proof}

\section{Generalization to Stochastic Online Linear Optimization
Problems}\label{sec:SOLO}

In this section, we consider the general Stochastic Online Linear
Optimization (SOLO) problems that include adaptive routing as a
special case. In a SOLO problem, there is a compact set
$\Uc\subset\Rc^d$ with dimension $d$, referred to as the action set.
At each time $t$, we choose a point $x\in \Uc$ and observe a cost
$\vec{C}_t\cdot x$ where $\vec{C}_t\in\Rc^d$ is drawn from an
unknown distribution\footnote{The following result holds in a more
general scenario that only requires the expected cost is linear with
$x$.}. We assume that $\{\vec{C}_t\}_{t\ge1}$ are i.i.d. over $t$.
The objective is to design a sequential action selection policy
$\pi$ to minimize the regret $\Rc^{\pi}(T)$, defined as the total
difference in the expected cost over $T$ slots compared to the
optimal action $x^*\defeq\arg\min\{\mathbb{E}[\vec{C}_t\cdot x]\}$:
{\[\Rc^{\pi}(T)\defeq\mathbb{E}[\sum_{t=1}^T(\vec{C}_t\cdot
x(\pi)-\vec{C}_t\cdot x^*)],\]}where $x(\pi)$ denotes the action
under policy $\pi$ at time $t$.

As a special case, the action space of the adaptive routing problem
consists of all path vectors with dimension $d\le m$. The main
difficulty in a general SOLO problem is on identifying the optimal
action in an infinite action set. Our basic approach is to implement
the shortest-path algorithm in Fig.~\ref{fig:SOSP} repeatedly for
the chosen action to gradually converge to the optimal action.
Specifically, the proposed algorithm runs under an epoch structure
where each epoch simulates one round of the algorithm and the epoch
lengths form a geometric progression.

\begin{theorem}\label{thm:general}
Assume that the cost for each action is light-tailed. Let $T_k$ be
the length of the $k$-th epoch. Construct an exploration sequence in
this epoch as follows. Let $a,\zeta,u_0$ be the constants such
that~\eqref{eqn:chernoff} holds on each cost. Choose a constant
$b>2/a$, a constant $c=\log T_k^{1/3}/T_k^{1/3}$, and a constant
$w\ge\max\{b/(d\zeta u_0)^2, 4b/c^2\}$. For each $t>1$, if
$|\Ac(t-1)|<d\lceil d^2w\log t\rceil$, then include $t$ in $\Ac(t)$.
The resulting policy $\pi^g$ has regret
{\[R^{\pi^g}(T)=O(d^3T^{2/3}\log^{1/3}T).\]}
\end{theorem}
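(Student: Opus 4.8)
The plan is to reduce the infinite-action SOLO problem to a sequence of finite-gap problems via the epoch structure, and within each epoch to replay the analysis of Theorem~\ref{thm:opt} with an artificially imposed resolution $c_k$ that shrinks with the epoch length. The difficulty absent in the routing problem is that $\Uc$ is a continuum, so there need not be any fixed gap between $x^*$ and the rest of the action set; consequently the logarithmic regret of Theorem~\ref{thm:opt} is unattainable, and the best one can hope for is to drive the per-slot bias to zero at a controlled rate. Partitioning time into epochs of geometrically growing lengths $T_k$ and running a fresh instance of the Fig.~\ref{fig:SOSP} algorithm in each epoch lets me fix the gap parameter to the constant $c_k=(\log T_k)^{1/3}/T_k^{1/3}$ for the whole epoch, so that the machinery of Theorem~\ref{thm:opt}, which requires a fixed $c$, applies inside the epoch.

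First I would bound the per-epoch exploration regret. Since each slot contributes $O(1)$ expected regret (the mean cost is a bounded linear function on the compact set $\Uc$), and the exploration cardinality in epoch $k$ is $d\lceil d^2 w\log T_k\rceil$ with $w\ge 4b/c_k^2$, substituting $c_k^2=(\log T_k)^{2/3}/T_k^{2/3}$ yields an exploration regret of order $d^3(\log T_k)/c_k^2=O(d^3 T_k^{2/3}(\log T_k)^{1/3})$.

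Next I would bound the per-epoch exploitation regret, where the barycentric-spanner interpolation is crucial. If every basis estimate $\overline{\theta}_{p_i}$ is within $c_k/(2d)$ of its true mean, then for any $x=\sum_i a_i\vec p_i$ with $|a_i|\le 1$ the interpolated estimate deviates from the true cost by at most $\sum_i|a_i|\,c_k/(2d)\le c_k/2$, so the action chosen in exploitation has true expected cost within $c_k$ of optimal. This bounds the \emph{bias} contribution by $c_k T_k=O(T_k^{2/3}(\log T_k)^{1/3})$. The residual bad-event contribution, where some basis estimate deviates by more than $c_k/(2d)$, is controlled exactly as in~\eqref{eqn:boundo}: the choice $b>2/a$ makes $\sum_t t^{1-ab}$ converge, so the expected number of such slots per epoch is $O(d)$, a lower-order term. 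The balance between the exploration cost ($\propto 1/c_k^2$) and the bias ($\propto c_k T_k$) is precisely what dictates the choice $c_k\sim(\log T_k/T_k)^{1/3}$, and this balancing is the conceptual heart of the argument.

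Finally I would sum over epochs. With $K=O(\log T)$ epochs whose lengths grow geometrically, the quantity $T_k^{2/3}$ also grows geometrically, so the sum $\sum_{k=1}^K T_k^{2/3}(\log T_k)^{1/3}$ is dominated, up to a constant depending only on the common ratio, by its last term; since $T_K=\Theta(T)$ this gives a total of $O(d^3 T^{2/3}(\log T)^{1/3})$. The main obstacle I anticipate is not any single estimate but the bookkeeping needed to justify fixing $c_k$ per epoch while still ensuring that the shrinking resolution drives the time-average cost to the optimum; verifying that~\eqref{eqn:neighbor} and the convergence in~\eqref{eqn:boundo} continue to hold uniformly as $c_k\to 0$ across epochs, so that the constants can be taken epoch-independent, is the step requiring the most care.
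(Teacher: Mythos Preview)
Your proposal is correct and matches the approach the paper itself indicates: the paper does not give a detailed proof of this theorem, only the sentence preceding it stating that the algorithm runs in epochs of geometrically increasing length, each epoch replaying the Fig.~\ref{fig:SOSP} routine, and your write-up fills in precisely those details. In particular, the decomposition into per-epoch exploration cost $d^3(\log T_k)/c_k^2$, per-epoch bias $c_k T_k$, and the $O(d)$ bad-event term via~\eqref{eqn:boundo}, together with the geometric summation over epochs, is the intended argument and the balancing that fixes $c_k=(\log T_k)^{1/3}T_k^{-1/3}$ is exactly right.
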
We point out that the regret order can
be improved to linear with $d$ by sacrificing the order with $T$ by
an arbitrarily small amount, as similar to
Theorem~\ref{thm:nearlog}.

\section{Conclusion}

In this paper, we considered the adaptive routing problem in
networks with unknown and stochastically varying link states, where
only the total end-to-end cost of a path is observable after the
path is selected for routing. For both light-tailed and heavy-tailed
link-state distributions, we proposed efficient online learning
algorithms to minimize the regret in terms of both time and network
size. The result was further extended to the general stochastic
online liner optimization problems. Future work includes extending
the i.i.d. cost evolution over time to more general stochastic
processes.

\end{document}